\newcommand{\NP}{\ensuremath{\mathcal{NP}}}
\title{On 4-Map Graphs and 1-Planar Graphs and their Recognition Problem
\thanks{Supported by the Deutsche Forschungsgemeinschaft (DFG), grant
    Br835/18-1.}}
\titlerunning{4-Map Graphs and 1-Planar Graphs}
\author{Franz J. Brandenburg}
\authorrunning{Franz J. Brandenburg}
\institute{University of Passau, 94030 Passau, Germany \\
  \email{brandenb@informatik.uni-passau.de}}
\begin{document}

\maketitle

\begin{abstract}
We establish a one-to-one correspondence between  1-planar graphs
and general and hole-free 4-map graphs and show that 1-planar graphs
can be recognized in polynomial time  if they are
crossing-augmented, fully triangulated,
and maximal 1-planar, respectively, with a polynomial of degree
$120, 3,$ and $5$, respectively.
\end{abstract}

\textbf{Keywords.} planar graphs, 1-planar graphs, map graphs,
maximality, recognition algorithms.

\section{Introduction}
\label{sect:intro}

Planarity is one of the most basic and influential concepts in graph
theory. Many properties of planar graphs have been explored,
including embeddings, duality, and minors.  There are many linear
time algorithms for their  recognition as well as for the
construction of straight-line grid drawings, see
\cite{t-handbook-GD}.

There were several attempts to generalize planarity to ``beyond''
planar graphs. Such graph allow crossings of edges with
restrictions. (In other works the term near, nearly  or almost
planar is used). Such attempts are important, since many graphs are
not planar. A prominent example is 1-planar graphs, which were
introduced by Ringel \cite{ringel-65} in an approach to color a
planar graph and its dual. A graph is \emph{1-planar} if it can be
drawn in the plane such that each edge is crossed at most once.
These graphs have found recent interest, in particular in graph
drawing, as presented by Liotta \cite{l-beyond-14}. Special cases
are IC-planar
  and outer 1-planar
graphs. A graph is \emph{IC-planar}
\cite{ks-cpgIC-10,zl-spgic-13,bdeklm-IC-15} if it has an embedding
with at most one crossing per edge and in which each vertex is
incident to at most one crossing edge. If a graph can be embedded in
the plane with all vertices in the outer face and at most one
crossing per edge, then it is \emph{outer 1-planar}
\cite{abbghnr-ro1pglt-13, abbghnr-o1p-15, de-eogracd-12,
heklss-o1p-13, heklss-ltao1p-15}.

Beyond planarity may also be defined in terms of maps. A  \emph{map}
$\mathcal{M}$ is a partition of the sphere into finitely many
regions. Each region is a closed disk
and the interiors of two regions are disjoint. Some regions are
labeled as \emph{countries}, and the remaining regions are lakes or
\emph{holes} of $\mathcal{M}$. In the plane, we use the region of
one country as the outer region, which is unbounded and encloses all
other regions. An \emph{adjacency} is defined by a touching of
regions. There is a \emph{strong} adjacency between two countries if
the boundaries of their regions intersect in a segment and  a
\emph{weak} adjacency if the boundaries intersect in a point. A map
$\mathcal{M}$ defines a graph $G$ such that the countries of
$\mathcal{M}$ are in one-to-one correspondence with the vertices of
$G$ and there is an edge if and only if the respective countries are
adjacent. Then $G$ is called a \emph{map graph}  and $\mathcal{M}$
the map of $G$.

Obviously, $k$ regions meeting at a point induce $K_k$ as a subgraph
of $G$. If no more than $k$ regions meet at a point, then
$\mathcal{M}$ is a $k$-map and $G$ a $k$-map graph. Map graphs were
introduced by Chen et al. \cite{cgp-pmg-98} and further studied in
\cite{cgp-mg-02, cgp-rh4mg-06}. Chen et al. observed that ordinary
planar graphs are the 2-map  or 3-map graphs and characterized map
graphs as half squares of bipartite planar graphs. Given a bipartite
graph $B = (V,U, M)$, its half square $H^2(B)$ is a graph with
vertices $V$ and whose edges are the paths of length two in $B$.
Chen et al. also proved that there are map graphs $G$ such that
$G-e$ is not a map graph.

In general, holes are necessary for the representation of graphs by
maps, since, e.g., grids cannot be represented, otherwise. If
$\mathcal{M}$ has no holes, then it is a \emph{hole-free} map and
its map graph $G$ is a \emph{hole-free map graph}. A hole-free map
looks like the dual of a planar graph. However, an adjacency at a
point includes weak adjacency. Chen et al. remark that a map graph
$G$ is hole-free if and only if the boundary of each face of the
bipartite graph $B$ with $G = H^2(B)$ has exactly four or six edges.
In \cite{cgp-rh4mg-06} they established a cubic time recognition
algorithm  for hole-free 4-map graphs. They also observed that the
3-connected hole-free map graphs are exactly the triangulated
1-planar graphs. A triangulated 1-planar graph has a 1-planar
embedding such that the boundary of each face consists of exactly
three edges or edge segments (up to a crossing point). We shall
extend this result and shall characterize 4-map graphs and hole-free
4-map graphs in terms of 1-planar graphs.

Given a (new) class of graphs, the recognition problem is always a
challenge. In general, the complexity is in the range between linear
time and \NP-completeness. Both extremes are reached by planar and
1-planar graphs. In general, 1-planarity is \NP-complete, as shown
by Grigoriev and Bodlaender \cite{GB-AGEFCE-07} and by Korzhik and
Mohar \cite{km-mo1ih-13}, and it remains \NP-complete even for
graphs of bounded bandwidth, pathwidth or treewidth
\cite{bce-pc1p-13}. \NP-completeness also holds for 3-connected
1-planar graphs with a given rotation system \cite{abgr-1prs-15},
i.e., the question whether a rotation system of a 3-connected graph
is 1-planar. Also, IC-planarity is \NP-complete, both for a graph
and a
 rotation system \cite{bdeklm-IC-15}.
In additon, deciding whether a planar graph is sub-Hamiltonian  is
\NP-complete \cite{w-chcppg-82}. A graph is sub-Hamiltonian if it is
a subgraph of a planar graph with a Hamilton circuit if and only if
it admits a two-page book embedding \cite{bk-btg-79}.

 Linear time algorithms for the recognition of planar
graphs have attracted many researchers, as Patrignani's survey in
\cite{p-pte-13} documents. Clearly, using any linear time algorithm
for planarity, it can be checked whether an embedding is 1-planar,
IC-planar and outer 1-planar, respectively. Independently and
simultaneously, Auer et al. \cite{abbghnr-ro1pglt-13} and Hong et
al. \cite{heklss-o1p-13} developed   linear time   algorithms for
the recognition of outer 1-planar graphs, see also
\cite{abbghnr-o1p-15, heklss-ltao1p-15}. Recently, Brandenburg
\cite{b-optimal1p-15} showed that optimal 1-planarity can be decided
in linear time. An optimal 1-planar graph has $4n-8$ edges.
Moreover, Eades et al. \cite{ehklss-tm1pg-13b} developed  a linear
time algorithm to test whether a rotation system is 1-planar. They
described their algorithm for maximal 1-planar graphs, but it also
goes through for  crossing augmentations, which are defined in
below.

 Surprisingly, map graphs  with holes are
feasible as shown by Thorup \cite{t-mgpt-98}. Chen et al.
\cite{cgp-rh4mg-06} remark   that Thorup's algorithm has a running
time of about $O(n^{120})$, and that it does not imply a polynomial
time recognition for $k$-map graphs and hole-free $k$-map graphs.
They detail a cubic time recognition algorithm $\mathcal{A}$ for
hole-free 4-map graphs.

In this paper, we   characterize 4-map graphs as crossing-augmented
1-planar graphs and hole-free 4-map graph as fully triangulated
1-planar graphs. The terms crossing-augmented and fully triangulated
are defined in Section \ref{main}. Then we use the recognition
algorithm of Chen et al. \cite{cgp-rh4mg-06} to show that fully
triangulated  and maximal 1-planar graphs can be recognized in
$O(n^3)$ and $O(n^5)$ time, respectively. Finally, we generalize the
test for a 1-planar rotation system of Eades et al.
\cite{ehklss-tm1pg-13b}  to crossing-augmented 1-planar graphs.

The paper is organized as follows. Section \ref{basics} describes
basic definitions. In Section \ref{main} we explore the relationship
between 1-planar graphs and map graphs and establish our results. We
conclude with some open problems in Section \ref{conclusion} and
given an answer on   conjectures of Chen et al. \cite{cgp-mg-02}.

\section{Foundations}
\label{basics}

We consider undirected graphs $G = (V, E)$ with $n$ vertices and $m$
edges. Unless otherwise stated, the graphs are simple and
2-connected.
An \emph{embedding} $\mathcal{E}(G)$   is a mapping of $G$ into the
plane or the sphere such that the vertices are mapped to distinct
points and each edge is a Jordan arc between its endpoints.
Crossings of incident edges with the same endpoint and
self-intersections are excluded. An embedding defines a
\emph{rotation system} $\mathcal{R}(G)$, which is a cyclic list of
incident edges or neighbors at each vertex. The embedding is
\emph{planar} if (the Jordan arcs of the) edges do not cross and
\emph{1-planar} if each edge is crossed at most once. We say that a
graph is \emph{planar} (\emph{1-planar}) if it has a planar
(1-planar, respectively) embedding, and accordingly for a rotation
system. The embedding $\mathcal{E}(G)$ is a witness for planarity
and 1-planarity, respectively, and it must satisfy the cyclic order
at each vertex in case of a given rotation system.

A planar embedding of a graph partitions the  plane (sphere)  into
\emph{faces}, which are closed disks (except for the outer face) and
are each specified by a cyclic sequence of edges (or the respective
vertices) that forms the boundary.
 In 1-planar
embeddings, a crossing subdivides an edge into two \emph{edge
segments}, and the  planarization  takes the crossing points and
edge segments into account and treats them as vertices and edges,
respectively.

 Given a class of graphs $\mathcal{G}$, a graph $G \in
\mathcal{G}$ is \emph{planar-maximal}, \emph{maximal} and
\emph{optimal}, respectively, if no further edge can be added to $G$
without inducing a crossing with some edge of $G$, violating the
defining property of $\mathcal{G}$, and violating the upper bound
for the number of edges of graphs in $\mathcal{G}$, respectively.
Hence, a graph  in $\mathcal{G}$ is maximal if there is no
supergraph in $\mathcal{G}$ with the same set of vertices and a
proper superset of edges, and it is optimal if there is no graph in
$\mathcal{G}$ of the same size and with more edges.
  Accordingly, an embedding $\mathcal{E}(G)$  of $G
\in \mathcal{G}$ is maximal (planar-maximal), if any edge added to
$\mathcal{E}(G)$ violates the defining properties of $\mathcal{G}$
(or is crossed, respectively). Thus, a graph $G$ is planar-maximal
(maximal)  if every embedding satisfying the properties of
$\mathcal{G}$ is planar-maximal (maximal, respectively). We call a
graph $G$ \emph{plane-maximal} 1-planar if  $G$ has a planar-maximal
embedding. In fact, every triangulated 1-planar graph is
plane-maximal but not necessarily planar-maximal 1-planar.
Note the difference between planar-maximal embeddings and graphs. As
an example, consider $K_5-e$, which is a maximal planar graph and
whose planar embedding is planar-maximal 1-planar. However, the
removed edge $e$ can be added and drawn planar if a $K_4$ subgraph
of $K_5-e$ is drawn   with a pair of crossing edges. Hence, $K_5-e$
is plane-maximal and not planar-maximal  1-planar or IC-planar.

Clearly, the concepts maximal-planar, maximal and optimal coincide
on planar graphs, and the maximum number of edges is $3n-6$.
Bodendiek et al. \cite{bsw-bs-83} showed that optimal 1-planar
graphs have $4n-8$ edges and that such graphs exist for $n=8$ and
all $n \geq 10$ \cite{bsw-1og-84}. The upper bound  was rediscovered
in many works. Bodendiek et al. also observed that there are maximal
1-planar graphs, which are not optimal. The gap in the number of
edges of maximal 1-planar is quite large, as shown by Brandenburg et
al. \cite{begghr-odm1p-13}, who found sparse maximal 1-planar graphs
with only $\frac{45}{17}n - \frac{84}{17}$ many edges.
Similarly, optimal IC-planar graphs have at most $\frac{13}{4}n - 6$
edges \cite{zl-spgic-13} and there are optimal IC-planar achieving
this bound. However,  there are maximal IC-planar graphs with only
$3n-5$ edges. For the latter, consider graphs   as displayed in Fig.
\ref{fig:sparseIC} and note that  maximal IC-planar graphs are
supergraphs of maximal planar graphs.
 Auer et al. \cite{abbghnr-o1p-15}
 observed that outer 1-planar
graphs have at most $2.5n-4$ edges, whereas there are maximal outer
1-planar graphs with $\frac{11}{5}n - \frac{18}{5}$ many edges, and
both bounds are tight.

\begin{figure}
   \begin{center}
     \includegraphics[scale=0.6]{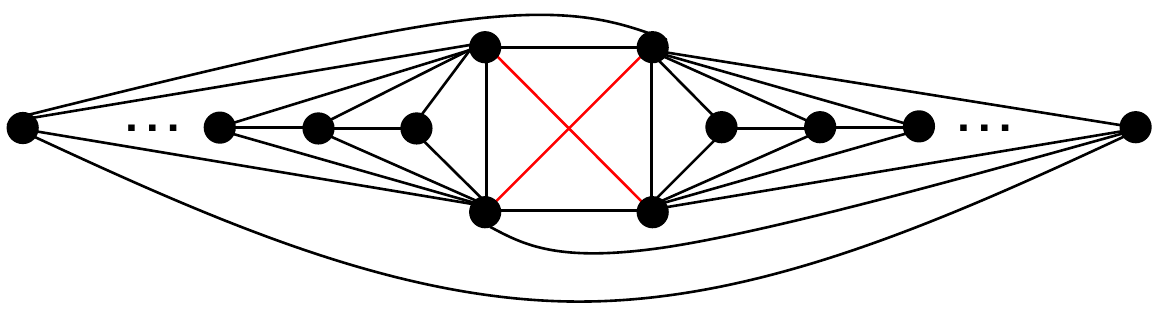}
     \caption{Sparse maximal IC-planar graphs with $3n-5$ edges
     \label{fig:sparseIC}}
   \end{center}
\end{figure}

The complete graph  on four vertices $K_4$ and its embedding plays a
crucial role. It can be drawn planar as a \emph{tetrahedron}  and
with a pair of crossing edges as a \emph{kite}, see Fig.
\ref{K4layout}.   In fact, there are four embeddings as a kite: fix
$a$ and flip $b,c$ and $c,d$ \cite{Kyncl-09}. In the terminology of
Chen et al. \cite{cgp-mg-02, cgp-rh4mg-06} on map graphs, a
tetrahedron corresponds to a rice-ball and a kite to a  pizza, see
Fig. \ref{K4maps}.

\begin{figure}
  \centering
  \subfigure[] {
    \includegraphics[scale=0.40]{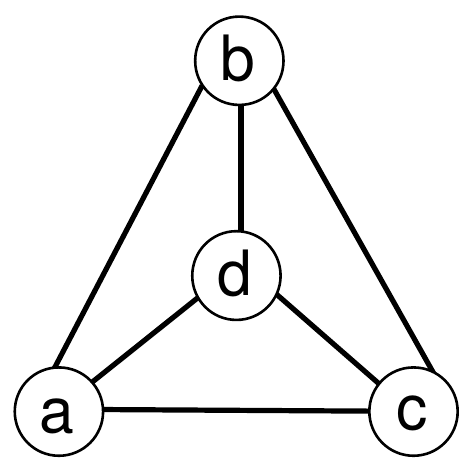}
    \label{fig:tetrahedron}
  } \quad\quad\quad\quad
  \subfigure[] {
      \includegraphics[scale=0.40]{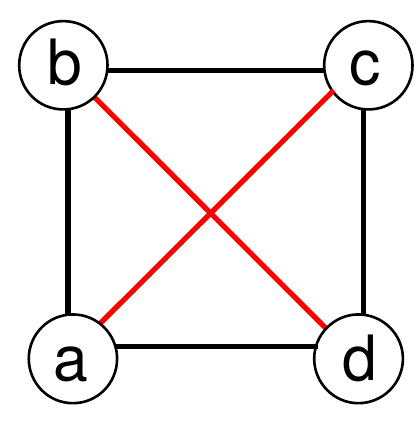}
      \label{fig:kite}
  }
   \caption{Drawings of $K_4$  \subref{fig:tetrahedron} planar as a tetrahedron and \subref{fig:kite} with a crossing
   as a kite.}
  \label{K4layout}
\end{figure}

\begin{figure}
   \centering
  \subfigure[] {
    \includegraphics[scale=0.40]{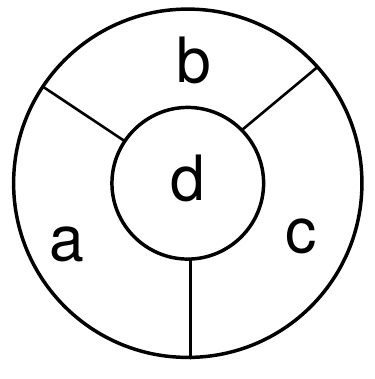}
    \label{fig:riceball}
  } \quad\quad\quad\quad
  \subfigure[] {
      \includegraphics[scale=0.40]{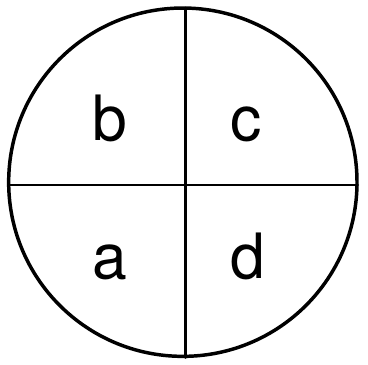}
      \label{fig:pizza}
  }
   \caption{A map of $K_4$ \subref{fig:riceball} as a rice-ball  and \subref{fig:pizza} as a  pizza .}
  \label{K4maps}
\end{figure}

We summarize the bounds on the complexity of subclasses of 1-planar
graphs in Table \ref{table1}.

\begin{table}
\begin{center}
\begin{tabular}{l |c | c | c}
                    \, & 1-planar & IC-planar & \, outer 1-planar \\
         \hline
                    graph  & \, \NP-complete \cite{GB-AGEFCE-07, km-mo1ih-13}\,& \, \NP-complete
                    \cite{bdeklm-IC-15} \, & $O(n)$ \cite{abbghnr-o1p-15, heklss-ltao1p-15}  \\
                    rotation system   &    \NP-complete \cite{abgr-1prs-15} &  \NP-complete
                    \cite{bdeklm-IC-15} & $O(n)$ \\
                    crossing-augmented \, & $O(n^{120})$ \cite{t-mgpt-98, cgp-rh4mg-06}&   ? & $O(n)$ \\
                    fully triangulated \, & $O(n^3)$ & ? &  $O(n)$ \\
                    plane-maximal \, & ? & ? & $O(n)$ \\
                    planar-maximal   &  ? &   ? & $O(n)$ \cite{abbghnr-o1p-15} \\
                    maximal &  $O(n^5)$ &   ? & $O(n)$ \cite{abbghnr-o1p-15} \\
                    optimal &  $O(n)$ \cite{b-optimal1p-15}   &   ? & $O(n)$   \\
\end{tabular}
\end{center}
\caption{The recognition complexity of 1-planar  graphs}
\label{table1}
\end{table}

\section{Polynomial time solvable instances}
\label{main}

In this section we characterize 4-map and hole-free 4-map graphs in
terms of 1-planar graphs and show that maximal-planar and maximal
1-planarity can be recognized  in $O(n^3)$ and $O(n^5)$ time,
respectively.

It was first observed by Ringel \cite{ringel-65} and   rediscovered
many times that a pair of crossing edges  in a 1-planar embedding
can be augmented  to form a kite. This augmentation seems to make
the difference between tractable and intractable instances of
1-planar graphs. However, augmentation needs an embedding.

\begin{definition}
A 1-planar embedding $\mathcal{E}(G)$ of  a 1-planar graph $G$ is
\emph{crossing-augmented} if for every pair of crossing edges
$\{a,b\}$ and $\{c,d\}$ in $\mathcal{E}(G)$  there are the edges
$\{a,b\}$, $\{b,c\}$, $\{c,d\}$, $\{d,a\}$ in $G$.
\end{definition}

Obviously, planar-maximal and maximal implies crossing-augmented.
More importantly, we can improve upon the observation of Chen et al.
\cite{cgp-rh4mg-06} that the triangulated 1-planar graphs are
exactly the 3-connected  hole-free  4-map graphs. Triangulated means
that the boundary of each face in an embedding consists of  exactly
three edges or edge segments (up to a crossing point), and it
enforces 3-connectivity.

\begin{theorem} \label{kite-augmented}
A 1-planar graph is crossing-augmented if and only if it is a 4-map
graph (with holes).
\end{theorem}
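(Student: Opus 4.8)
The plan is to prove both implications by translating directly between the two local pictures that already appear in the paper: a kite (a crossing inside a $K_4$, Fig.~\ref{fig:kite}) and a pizza (four map regions meeting at a single point, Fig.~\ref{fig:pizza}). The guiding observation is that four regions meeting at one point of a map must be arranged cyclically, so that consecutive regions share a boundary segment (a strong adjacency) while the two opposite pairs touch only at the point (weak adjacencies); reading countries as vertices, this is exactly a $K_4$ whose two diagonals cross. Thus a pizza is the map-theoretic incarnation of a crossing-augmented crossing, and the rest of the proof is the bookkeeping needed to turn this informal correspondence into a rigorous global one.

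For the direction ``4-map graph $\Rightarrow$ crossing-augmented,'' I would start from a 4-map $\mathcal{M}$ with map graph $G$, place one vertex in the interior of each country, and draw every adjacency through the relevant meeting feature: a strong adjacency as a planar edge across the shared boundary segment, a weak adjacency as an arc through the point where the two countries touch. The key structural step is to show that a weak adjacency between two countries can only occur where exactly four regions meet (two regions touching at a point must be separated on both sides, so at least four regions are present, and the 4-map bound forces equality). Hence every point carrying a weak adjacency is a pizza, its two diagonals cross there exactly once, and, if all four regions are countries, its four frame segments supply the surrounding $K_4$; if one of the four regions is a hole, the single surviving diagonal can be routed through the point without crossing anything. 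In either case every edge is crossed at most once and every crossing sits inside a $K_4$, so the embedding is 1-planar and crossing-augmented.

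For the converse ``crossing-augmented $\Rightarrow$ 4-map graph,'' I would run the correspondence backwards, most cleanly through Chen et al.'s characterization of map graphs as half-squares $H^2(B)$ of bipartite planar graphs $B=(V,U,M)$, specialized so that every $u\in U$ (a meeting point) has $\deg_B(u)\le 4$. Take $V$ to be the vertices of $G$; for each crossing of the given embedding, with kite $a,b,c,d$, introduce a point $u$ adjacent to $a,b,c,d$, placed at the crossing with its four edges drawn along the four edge segments, so that $H^2$ contributes precisely the six edges of this $K_4$; for each uncrossed edge $pq$ not already covered by a kite, introduce a degree-two point $u'$ on that edge joining $p$ and $q$. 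Drawing all these $u$-vertices along the existing planar features of $\mathcal{E}(G)$ shows that $B$ is planar, that every $u$ has degree at most four, and that $H^2(B)$ has exactly the edge set of $G$, while the leftover faces become holes. Equivalently one may build the map directly by fattening each vertex into a region, letting uncrossed and frame edges become shared segments and each crossing become a pizza, and filling the remaining faces with holes.

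The main obstacle in both directions is the same topological bookkeeping, and I expect it to be the real work. Going into maps, I must check that the constructed regions are genuine closed disks with disjoint interiors that tile the sphere, that no two countries become adjacent unless the corresponding edge is present (no spurious strong or weak adjacencies where several kites share a vertex or an edge), and that at most four regions meet at any point; the delicate places are the high-degree vertices of $G$ and the overlaps of kites. Coming out of maps, the crux is the claim that weak adjacencies are confined to four-region pizza points, which rests on the 4-map bound together with the fact that each region, being a closed disk, contributes a single wedge at any boundary point. Once this local classification of meeting points is pinned down, both constructions and the matching of the crossing-augmented condition with the pizza condition follow by routine, if tedious, case analysis.
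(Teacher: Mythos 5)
Your proposal is correct, and one of its two directions is the paper's own. For ``crossing-augmented $\Rightarrow$ 4-map graph'' both you and the paper construct the planar bipartite witness $B=(V,U,F)$ by putting a degree-4 vertex of $U$ at each crossing (adjacent to the four kite vertices) and a degree-2 vertex of $U$ on each uncrossed edge, and both hinge on the same observation: crossing-augmentation is exactly what makes $H^2(B)$ equal to $G$ rather than a proper supergraph, so Chen et al.'s half-square characterization of 4-map graphs applies. You differ in the converse direction. The paper stays inside the half-square framework: starting from a planar bipartite $B$ with $G=H^2(B)$ and $U$-degrees two and four, it contracts the edges at degree-2 vertices and expands each degree-4 vertex into a kite, and crossing-augmentation of the resulting embedding is immediate. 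You instead argue directly on the map---vertices inside countries, strong adjacencies routed through shared boundary segments, weak adjacencies through shared points---supported by your key lemma that a weak adjacency can occur only at a 4-point (a closed-disk region contributes a single wedge at a boundary point, two non-consecutive wedges force at least four regions, and the 4-map bound gives equality). This buys self-containedness: you in effect re-prove the half of the cited characterization that the paper uses for free, and you handle explicitly the pizza-with-a-hole case (the surviving diagonal is uncrossed), which the paper silently absorbs into the citation (such points give degree-3 vertices of $U$, glossed over by the paper's ``degree two and four'' phrasing). The cost is the topological bookkeeping you honestly flag---single wedges, non-crossing stars inside regions, no spurious adjacencies---which the paper avoids by outsourcing it to Chen et al. Both arguments are sound at a comparable, sketch-level degree of rigor.
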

\begin{proof}
A graph $G = (V,E)$ is a 4-map graph if and only if $G = H^2(B)$ for
a planar bipartite graph $B = (V,U, F)$ \cite{cgp-mg-02} with
vertices of degree two  and four  in $U$. Construct a 1-planar
embedding of $G$ from the embedding of $B$ by contracting the edges
incident to degree-2 vertices of $U$ and replacing each vertex of
degree four of $U$ by a kite. Then $G$  is crossing-augmentation.
Conversely, an embedded planar bipartite graph $B$ is obtained from
$\mathcal{E}(G)$ by subdividing each planar edge and replacing each
crossing point by a degree-4 vertex of $U$ such that $G = H^2(B)$.
 \qed
\end{proof}

The  result of Thorup \cite{t-mgpt-98} and the remark of Chen et al.
\cite{cgp-rh4mg-06}  is  used for our first  recognition problem on
1-planar  graphs.

\begin{corollary} \label{thm-kite}
For a graph  $G$, it takes polynomial time (of degree about $120$)
to test whether $G$ is crossing-augmented 1-planar.
\end{corollary}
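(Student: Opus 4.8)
The plan is to obtain the corollary as an almost immediate consequence of Theorem~\ref{kite-augmented} together with Thorup's recognition result, rather than by constructing a new algorithm for it. By the theorem, a graph $G$ is crossing-augmented 1-planar precisely when it is a 4-map graph with holes, so the decision problem is membership in this class. Every 4-map graph is in particular a map graph with holes, and Thorup's algorithm~\cite{t-mgpt-98} recognizes map graphs with holes; by the running-time estimate recorded by Chen et al.~\cite{cgp-rh4mg-06} it runs in time polynomial of degree about $120$. The core of the proof is thus to invoke this algorithm as the workhorse, so that the dominant cost---and the stated degree---is inherited from Thorup, while the remaining bookkeeping contributes only lower-order polynomial overhead.

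The step that requires care, and which I expect to be the main obstacle, is the passage from a \emph{general} map witness to a certificate that the witnessing map is a \emph{4-map}, i.e.\ the guarantee that at most four regions meet at any point. Through the half-square characterization of Chen et al. this is the condition that in a witnessing planar bipartite graph $B$ with $G = H^2(B)$ every vertex of $U$ has degree two or four. One cannot merely run Thorup and accept, since map graphs with holes properly contain 4-map graphs---for instance $K_7$ is a map graph but, failing to be 1-planar, is not a 4-map graph---so the bounded-multiplicity condition must be verified rather than assumed. I would close this gap by exploiting the rigidity of crossing augmentation: every crossing lies inside a kite and hence contributes exactly one degree-four meeting point realizing a $K_4$, whereas all other adjacencies are ordinary pairwise (planar) adjacencies, so the only admissible high-multiplicity points are the degree-four crossing points. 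Expressed on $G = H^2(B)$ this becomes a bounded-degree requirement on $U$ that is checkable with polynomial overhead of far smaller degree than $120$, leaving Thorup's subroutine as the bottleneck and yielding the claimed bound.
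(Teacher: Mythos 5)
Your first paragraph coincides with the paper's entire argument: the corollary is stated there with no proof body, preceded only by the remark that the result of Thorup \cite{t-mgpt-98} and the running-time estimate of Chen et al.\ \cite{cgp-rh4mg-06} are ``used,'' i.e., combine Theorem~\ref{kite-augmented} (crossing-augmented 1-planar $=$ 4-map graph with holes) with Thorup's recognition algorithm and its degree-$120$ bound. Up to that point you take the same route as the paper, and you are right that this route has a delicate spot: Thorup decides membership in the class of \emph{general} map graphs, and the paper itself quotes Chen et al.'s caveat that this ``does not imply a polynomial time recognition for $k$-map graphs'' --- which is exactly the class the theorem delivers. Your $K_7$ example correctly shows the containment is proper.

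Where your proposal breaks down is the patch. Even granting that Thorup's decision procedure can be made to output a witness, it yields only \emph{some} planar bipartite $B$ with $G = H^2(B)$, and a 4-map graph may possess witnesses in which $U$-vertices have degree greater than four: $K_5$ is crossing-augmented 1-planar (it is complete), hence a 4-map graph by the theorem, yet it is also realized as a pizza in which five regions meet at one point, i.e., by a witness with a degree-five vertex in $U$. So checking your bounded-degree condition on whatever witness the algorithm happens to return can reject yes-instances, and nothing in your argument converts an arbitrary witness into a bounded-degree one or bounds a search over witnesses. The appeal to the ``rigidity of crossing augmentation'' is circular: it describes the structure of the crossing-augmented embedding (equivalently, of a 4-map witness) whose \emph{existence} is precisely what the algorithm must decide; you do not have that embedding in hand. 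What is missing is either a lemma of the form ``a map graph is a 4-map graph if and only if $G$ itself satisfies some polynomially checkable property'' or a normalization procedure turning an arbitrary map witness into a 4-map witness whenever one exists --- neither is supplied. In fairness, the paper's one-line derivation silently skips the same step, so you have put your finger on a genuine soft spot of the corollary as stated; but your attempt to close the gap does not succeed.
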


When considering  maps as  dual graphs, one must avoid vertices of
degree greater than four in the dual graphs if 4-maps and 4-map
graphs are taken into account. Hence, the faces of an embedded graph
should be triangles or look like kites. However, this presupposition
is not granted if the given graph is not 3-connected.
If there is a separation pair $\{u,v\}$ and $G-\{u,v\}$ decomposes
into components $G_1, \ldots, G_k$ for some $k>1$, then $G$ is
1-planar if and only if each of the graphs $G_i + e$ is 1-planar
with $e=(u,v)$ as a planar edge, as noted by Chen et al.
\cite{cgp-rh4mg-06}. Similarly, Brandenburg \cite{b-vr1pg-14} has
introduced copies of the edge $(u,v)$ to separate the components at
a separation pair. This idea must be retained.

\begin{definition}
A 1-planar embedding $\mathcal{E}(G)$ is \emph{fully triangulated}
if the separated embedding $\mathcal{E}_s(G)$ is triangulated, i.e.,
the boundary of each face consists of three edges or edge segments
up to a crossing point. $\mathcal{E}_s(G)$ is obtained from
$\mathcal{E}(G)$ by adding a copy of the edge $(u, v)$ to separate
components   at a separation pair $\{u,v\}$. Accordingly, a graph is
\emph{fully triangulated 1-planar} if it admits such an embedding.
\end{definition}

Chen et al. \cite{cgp-rh4mg-06} noted that a graph $G$ is a 4-map
graph if and only if $G$ is a triangulated 1-planar graph, provided
$G$ is 3-connected. We generalize this result to 2-connected graphs.
Note that maps as well as planar dual graphs enforce 2-connectivity.

\begin{theorem} \label{kite-augmented}
A 1-planar graph is fully triangulated if and only if it is a
hole-free 4-map graph.
\end{theorem}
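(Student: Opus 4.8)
The plan is to run both implications through the bipartite witness $B$, reusing the construction of the previous theorem (contract the degree-$2$ vertices of $U$, expand each degree-$4$ vertex of $U$ into a kite) and combining it with the criterion of Chen et al.\ \cite{cgp-rh4mg-06} that $G = H^2(B)$ is hole-free exactly when every face of $B$ is bounded by four or six edges. Since that theorem already identifies $4$-map graphs with crossing-augmented $1$-planar graphs, the only new content is to match the combinatorial condition ``all faces of $B$ have size $4$ or $6$'' with the geometric condition ``the separated embedding $\mathcal{E}_s(G)$ is triangulated.'' Throughout, $B$ will be read off from $\mathcal{E}_s(G)$ rather than from $\mathcal{E}(G)$, which is what makes the separation edges enter the picture.

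For the forward direction I would start from a fully triangulated embedding, form $\mathcal{E}_s(G)$, and note that it is crossing-augmented: the four triangular faces surrounding a crossing point force the four base edges of the kite, so the previous theorem already yields a witness $B$, obtained from $\mathcal{E}_s(G)$ by subdividing every planar edge with a degree-$2$ vertex of $U$ and turning every crossing point into a degree-$4$ vertex of $U$. The crucial step is a face count. Because an edge is crossed at most once, no edge segment joins two crossing points, so every triangular face of the planarization is either bounded by three planar edges or by two segments at one crossing point together with one planar base edge. Under the construction the first kind becomes a hexagonal face of $B$ and the second a quadrilateral face, whence all faces of $B$ have four or six edges and $G$ is hole-free. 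The separation copies simply appear as additional degree-$2$ vertices of $U$; since $H^2$ identifies parallel length-two paths, they leave the simple graph $G = H^2(B)$ unchanged.

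Conversely, given a hole-free witness $B$ all of whose faces have four or six edges, I would apply the same construction to obtain a triangulated plane structure and then recover $\mathcal{E}_s(G)$ from it. Replacing each degree-$4$ vertex by a kite subdivides its four incident (necessarily quadrilateral) faces into kite-corner triangles, while contracting the degree-$2$ vertices collapses each hexagonal face to a triangle on three real edges; a short case analysis on how many $U$-vertices of a face have degree four confirms that every resulting region is bounded by three edges or segments. Identifying parallel edges yields the simple graph $G$, its $1$-planar embedding $\mathcal{E}(G)$, and the triangulated separated embedding $\mathcal{E}_s(G)$, so $G$ is fully triangulated $1$-planar.

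The main obstacle, and the reason the separated embedding is built into the definition, is exactly the bookkeeping of these parallel edges. Two degree-$2$ vertices of $U$ with the same pair of neighbors, or a kite base edge duplicated by a degree-$2$ vertex, correspond precisely to two countries sharing two boundary arcs, i.e.\ to a separation pair $\{u,v\}$; the duplicated $1$-planar edge must be retained as the copy that $\mathcal{E}_s(G)$ prescribes, for otherwise the face that a hole-free $B$ triangulates would re-open into a larger, non-triangular face of $\mathcal{E}(G)$ (and into a hole of the naive witness of $\mathcal{E}(G)$). The delicate point is to verify that these duplicated edges arise if and only if $\{u,v\}$ is a separation pair, that they never bound a common bigon, and that re-inserting them is consistent with $2$-connectivity and with the kites, so that the equivalence between hole-free witnesses and triangulated separated embeddings is exact.
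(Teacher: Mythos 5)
Your route is genuinely different from the paper's: you run both directions through the half-square witness $B$ and Chen et al.'s ``every face of $B$ has four or six edges'' criterion, extending the proof of the preceding theorem, whereas the paper argues by planar duality (delete one edge of each crossing pair to obtain a triangulated planar graph $H$, pass to the dual $H^*$, and contract the dual edge between the two triangles created by each deletion to form the 4-points; conversely, read the map as a dual, take the primal, and add the two diagonals at each 4-point). Your forward direction is sound, and in fact cleaner than the paper's: the observation that no side of a face can join two crossing points, so that every triangle of the planarization of $\mathcal{E}_s(G)$ lifts to a quadrilateral or hexagonal face of $B$, is exactly the right face count.

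The converse, however, rests on two claims that are false as stated. First, faces of $B$ incident to a degree-4 vertex are \emph{not} ``necessarily quadrilateral.'' Consider the witness for $K_5$ with $V=\{a,b,c,d,e\}$, one degree-4 vertex $u_1$ adjacent to $a,b,c,d$, and degree-2 vertices for the pairs $ab,bc,cd,de,ea,eb,ec$ but none for $ad$: every face has size four or six, yet the hexagonal face $d,u_1,a,u_{ea},e,u_{de}$ contains the degree-4 vertex $u_1$. (The construction survives there, because the kite chord $ad$ splits that hexagon into a corner triangle and the triangle $d,a,e$, but this is precisely the work your case analysis must do and the parenthetical claim denies.) Second, and more seriously, duplicated edges do \emph{not} arise ``if and only if $\{u,v\}$ is a separation pair,'' and they cannot all be retained as separation copies of $\mathcal{E}_s(G)$. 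In the same $K_5$ witness the kite chord $ab$ at $u_1$ is parallel to the planar edge $ab$ obtained by contracting $u_{ab}$, and $\{a,b\}$ is not a separation pair of the 4-connected graph $K_5$; likewise, the hole-free map of $K_4$ by four sectors of the sphere meeting at both poles has two 4-points and yields doubled diagonals $ac$ and $bd$ at pairs that are not separation pairs. Such duplicates must be \emph{deleted}, not retained, and one must argue that deleting the right copy keeps the embedding triangulated (the bigon between a chord and its parallel planar edge merges with the adjacent corner triangle back into a triangle). The paper avoids this trap by working with the primal graph $H$ of strong adjacencies only---there the multi-edge/separation-pair equivalence is correct---and by adding only the two diagonals at each 4-point, exploiting the geometric fact that consecutive countries around a 4-point already share boundary arcs, so the kite's side edges are never created as new parallel copies. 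Your bookkeeping paragraph needs to be replaced by an analysis of this finer kind; as written, it would ``verify'' statements that the $K_5$ and $K_4$ examples refute.
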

\begin{proof}
Let $\mathcal{E}_s(G)$ be a weakly triangulated embedding. Then
remove one edge from each pair of crossing edges. The resulting
graph $H$ is triangulated including  copies of the edge between
separation pairs. The hole-free 4-map is obtained from an embedding
of the dual $H^*$ by contracting the edge between the vertices of
triangular faces, which were obtained by a removal of a crossing
edge.  Each contraction merges the end-vertices and results in a
4-point. The weak adjacency returns the formerly removed crossing
edge. Note that $H^*$ may have multiple edges.

Conversely, if $\mathcal{M}$ is a hole-free 4-map of $G$, then take
$\mathcal{M}$  as a planar dual $H^*$ and construct the planar
primal graph $H$. $H$ has multiple edges between vertices $u$ and
$v$ if and only if the boundary between two regions is not a simple
curve or path if and only if $\{u,v\}$ is a separation pair of $G$.
At each 4-point of $\mathcal{M}$ with countries $a,b,c,d$ in this
order, $H$ includes the edges $(a,b), (b,c), (c,d), (d,a)$. Add the
edges $(a,c)$ and $(b,d)$ to  obtain a graph  $G^{\prime}$ such that
there is a kite with vertices $a,b,c,d$ in the embedding
$\mathcal{E}(G^{\prime})$, which is  obtained from the embedding of
$H$. Here, multiple copies of edges are taken into account. Finally,
remove all but one copy of each multi-edge form $G^{\prime}$ such
that the resulting graph is simple. This graph is $G$.
 \qed
\end{proof}

Crossing-augmentation and triangulation enforce distinctions between
1-planar and map graphs. Chen et. al \cite{cgp-mg-02} have shown
that the removal of an edge destroys map graphs. Their example can
be used to show that map graphs are not closed under subdivision. On
the other hand, 1-planar graphs are closed under taking subgraphs
and subdivisions.  In fact, every graph can be obtained from a
1-planar graph by subdivisions. Hence, neither map graphs nor
1-planar graphs  can be characterized by minors.

Using the the cubic time recognition algorithm of Chen et al.
\cite{cgp-rh4mg-06} and the SPQR-decomposition for the detection of
all separation pairs \cite{dt-olpt-96}, we immediately obtain:

\begin{corollary}
For a graph  $G$, it takes   $\mathcal{O}(n^3)$ time to test whether
 $G$ is fully triangulated 1-planar.
\end{corollary}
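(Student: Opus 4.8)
The plan is to combine the characterization just established in the preceding theorem with the cubic-time recognition procedure of Chen et al. By that theorem, $G$ is fully triangulated 1-planar if and only if it is a hole-free 4-map graph, and the latter membership can be tested in $O(n^3)$ time. The only care that is needed concerns the separated embedding $\mathcal{E}_s(G)$ that is hidden in the definition of ``fully triangulated'': the copies of the edge $(u,v)$ introduced at a separation pair $\{u,v\}$ must be reflected in the graph that is handed to the 4-map recognizer, since, as observed in the proof of the theorem, these copies correspond exactly to the multiple edges of the primal graph $H$ dual to the hole-free map. This is precisely why the linear-time triconnectivity machinery is invoked alongside the 4-map algorithm.

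First I would compute the SPQR-decomposition of $G$ in linear time and use it to enumerate all separation pairs $\{u,v\}$; since $G$ is assumed 2-connected, these 2-cuts are the only obstructions to 3-connectivity that need to be handled. At each separation pair I would insert the copy of $(u,v)$ prescribed by the definition of $\mathcal{E}_s(G)$, so that the components $G_1,\dots,G_k$ of $G-\{u,v\}$ are each completed by the planar edge $e=(u,v)$. This mirrors the decomposition principle already recorded for plain 1-planarity, namely that $G$ is fully triangulated 1-planar exactly when every $G_i + e$ is. Because each separation pair contributes only a bounded number of copies and the number of separation pairs is $O(n)$, the resulting separated (multi-)graph still has $O(n)$ vertices and edges, so the size blow-up is harmless for the cubic bound.

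Next I would run Chen et al.'s algorithm on this separated graph to decide membership in the class of hole-free 4-map graphs, accounting for the multiple edges that the theorem's construction produces between the two sides of a separation pair. The characterization theorem guarantees that the recognizer answers ``yes'' precisely when $G$ admits a fully triangulated 1-planar embedding. Since the SPQR preprocessing and the edge insertions are linear and the hole-free 4-map test is cubic, the overall running time is $O(n)+O(n^3)=O(n^3)$, as claimed.

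I expect the main obstacle to be the bookkeeping at separation pairs rather than any deep argument. One must verify that the SPQR virtual edges are placed so that the separated graph is genuinely the primal of a hole-free 4-map --- in particular that a separation pair yields the nonsimple boundary between two regions, exactly as in the theorem's proof --- and that invoking the hole-free 4-map recognizer on a multigraph, rather than on the simple 3-connected graph for which the equivalence with triangulated 1-planarity was originally stated, preserves both correctness and the cubic bound. Once the correspondence between copy edges, multiple edges, and separation pairs is pinned down, the corollary follows immediately from the theorem together with the two cited algorithms.
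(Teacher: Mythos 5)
Your proposal is correct and follows essentially the same route as the paper: the paper likewise obtains the corollary immediately by combining the characterization theorem (fully triangulated 1-planar $\Leftrightarrow$ hole-free 4-map graph) with the cubic-time hole-free 4-map recognition algorithm of Chen et al.\ and the linear-time SPQR-decomposition for detecting separation pairs. Your write-up merely spells out the bookkeeping at separation pairs (inserting the copy edges before invoking the recognizer) that the paper leaves implicit in its one-line derivation.
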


\begin{theorem}
For a graph  $G$, it takes   $\mathcal{O}(n^5)$ time to test whether
 $G$ is maximal 1-planar.
\end{theorem}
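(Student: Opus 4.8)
The plan is to reduce maximality testing to a quadratic number of calls of the $\mathcal{O}(n^3)$ recognizer for fully triangulated 1-planar graphs from the preceding corollary. By definition, $G$ is maximal 1-planar if and only if $G$ is 1-planar and, for every non-adjacent pair $\{u,v\}$, the graph $G + uv$ is not 1-planar. There are $\mathcal{O}(n^2)$ such pairs, so if each of the relevant 1-planarity questions can be settled in $\mathcal{O}(n^3)$ time, we obtain the claimed bound. The difficulty is that deciding 1-planarity is \NP-complete in general, so the whole argument must ensure that every 1-planarity query that arises stays inside a subclass where the algorithm of Chen et al. applies.

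First I would establish the necessary structural step: if $G$ is maximal 1-planar then $G$ is fully triangulated 1-planar. The point is that in any 1-planar embedding of a maximal $G$ every crossing must form a kite, since a missing kite edge could otherwise be inserted, and the separated embedding $\mathcal{E}_s(G)$ can have no face bounded by more than three edges or edge segments: a larger face either admits a chord, contradicting maximality, or its diagonal already exists in $G$, in which case the corresponding vertex pair is a separation pair and a copy of that edge triangulates the face in $\mathcal{E}_s(G)$. Hence the first phase runs the $\mathcal{O}(n^3)$ recognizer on $G$; a negative answer already certifies that $G$ is not maximal 1-planar, and a positive answer both confirms 1-planarity of $G$ and supplies the triangulated structure needed below. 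This sidesteps the \NP-hard stand-alone 1-planarity test for $G$.

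For the saturation phase I would prove the key lemma that accounts for the second factor $n^2$: if $G$ is fully triangulated 1-planar and $\{u,v\}$ is a non-edge, then $G + uv$ is 1-planar if and only if $G + uv$ is again fully triangulated 1-planar. Granting this, the algorithm loops over all $\mathcal{O}(n^2)$ non-edges and, for each, applies the $\mathcal{O}(n^3)$ recognizer to $G + uv$; then $G$ is maximal 1-planar precisely when every such test is negative. The lemma is what confines every query to the tractable subclass: once $G$ is triangulated in the separated embedding, any additional edge that keeps the graph 1-planar necessarily sits inside the rigid $4$-map structure guaranteed by the characterization of hole-free $4$-map graphs, so the augmented graph is itself a hole-free $4$-map graph, i.e., fully triangulated. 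The separation pairs needed to form $\mathcal{E}_s$ are read off from the SPQR-decomposition in linear time.

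I expect the main obstacle to be exactly this lemma, and specifically its ``only if'' direction: one must rule out that $G + uv$ becomes 1-planar only through a globally different, non-triangulated re-embedding that the triangulated recognizer would reject. Establishing it amounts to showing that a triangulated 1-planar embedding is rigid enough that inserting a single admissible edge cannot destroy triangulability of the separated embedding, which I would argue through the one-to-one correspondence with hole-free $4$-map graphs and the kite/separation-pair face analysis used above. The necessary-condition step of the second paragraph is comparatively routine, and the running time is then immediate: one $\mathcal{O}(n^3)$ call plus $\mathcal{O}(n^2)$ further $\mathcal{O}(n^3)$ calls, i.e., $\mathcal{O}(n^5)$ in total.
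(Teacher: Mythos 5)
Your proposal takes essentially the same route as the paper: one call of the $\mathcal{O}(n^3)$ recognizer for fully triangulated 1-planar graphs on $G$ itself, plus $\mathcal{O}(n^2)$ further calls on the augmented graphs $G+uv$, giving $\mathcal{O}(n^5)$ overall. In fact the paper's own proof is nothing more than this counting argument stated in two lines; the two structural claims you isolate and flag as the real work (that maximality forces a fully triangulated embedding, and that any 1-planar augmentation $G+uv$ of a fully triangulated graph is again fully triangulated, so every 1-planarity query stays inside the tractable subclass) are left entirely implicit there, so your write-up is, if anything, more explicit about what must be proved than the paper itself.
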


\begin{proof}
Clearly, a graph $G$ is maximal 1-planar if $G$ is maximal 1-planar
and $G+e$ is not for any new edge $e$ added to $G$, and each of the
$O(n^2)$ tests takes $O(n^3)$ time.
 \qed
\end{proof}

We would like to establish tractability also for plane-maximal and
planar-maximal 1-planar graphs.  The obstacle is the variety of
1-planar embeddings. There are even optimal 1-planar graphs with
different embeddings, see \cite{s-s1pg-86, s-rm1pg-10}. Algorithm
$\mathcal{A}$ of Chen et al. \cite{cgp-rh4mg-06} embeds a $K_4$ as a
kite (correct pizza), whenever possible, and then ``makes progress''
by removing one crossing edge. However, there are places, such as a
so-called separating edge, where $\mathcal{A}$ has a choice. If
$\mathcal{A}$ computes a planar-maximal embedding of a graph $G$,
then there may be another embedding such that a planar edge can be
added. Conversely, if the computed embedding is not planar-maximal,
there may be a planar-maximal one. However, we can only reduce the
general case to the 3-connected case.

\begin{definition}
A 1-planar embedding $\mathcal{E}(G)$ with a planar edge $(a,b)$ in
the outer face is called \emph{open} if after the removal of $(a,b)$
there is a vertex $v$ of $G$ in the outer face. $v$ is called
\emph{open vertex}. Otherwise, $G$ is called \emph{closed}. A
1-planar graph with a distinguished edge $(a,b)$ is open if it has
an open embedding.
\end{definition}

A  1-planar graph is closed if its   embedding is a W-configuration
of Thomassen \cite{t-rdg-88}, see Fig. \ref{fig:W-config}.
W-configurations do not allow straight-line 1-planar drawings, as
noted in \cite{t-rdg-88} and \cite{help-ft1pg-12}. An embedding is
open at one or two sides. In the first case it is a B-configuration
\cite{t-rdg-88},  see \ref{fig:B-config} and it has a planar
interface  if it is two-sided open.

\begin{figure}
  \centering
  \subfigure[] {
    \rotatebox{90}{\includegraphics[scale=0.60]{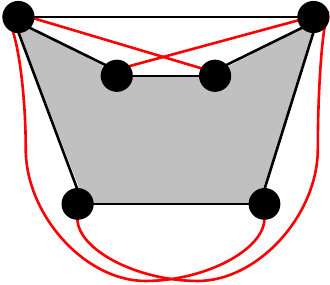}}
    \label{fig:W-config}
  }\quad\quad\quad \quad
  \subfigure[] {
      \rotatebox{90}{\includegraphics[scale=0.6]{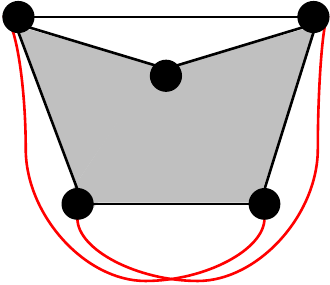}}
      \label{fig:B-config}
 }
   \caption{A \subref{fig:W-config} W-configuration and a \subref{fig:B-config} B-configuration.
   There may be subgraphs in the outer face and in the shaded area
   }
  \label{configs}
\end{figure}

In a map, the boundary between two regions $u$ and $v$ is not a
simple curve and looks like a chain of pearls, see Fig.
\ref{fig:pearls}. Each pearl represents a component $G_i$ of $G-\{u,
v \}$ at a separation pair $\{u, v \}$ and has  a left and a right
contact point. Now, $G_i$ is closed if and only if both contact
points   are 4-points. If both contact points are 3-points, then
$G_i$ is open and planar, and it is a B-configuration, if  one
contact point is a 4-point.

\begin{figure}
   \begin{center}
     \includegraphics[scale=0.33]{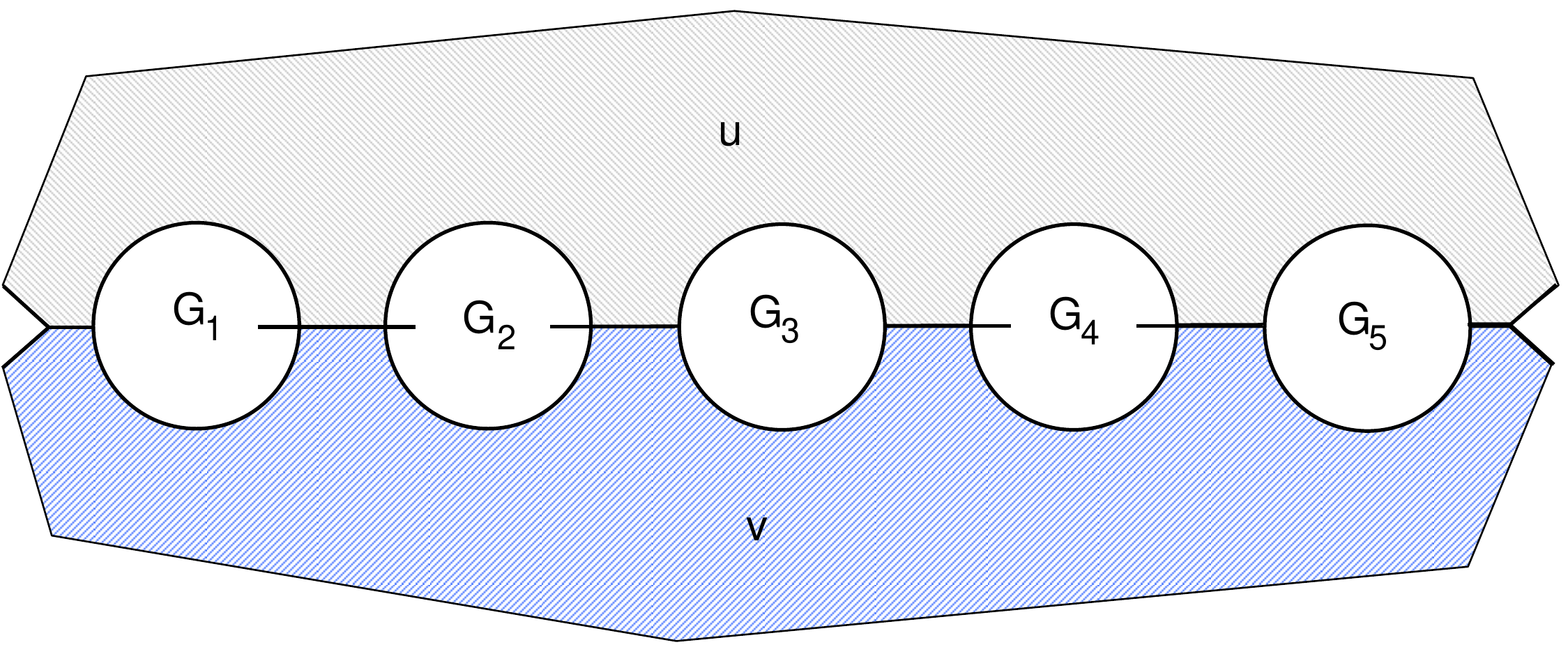}
     \caption{A map of a separation pair $\{u,v\}$ with a chain of pearls.
     $G_1$ is open with a B-configuration, $G_2$ and $G_4$ are closed, and
     $G_3$ and $G_5$ are open and planar.
     \label{fig:pearls}}
   \end{center}
\end{figure}

There are many embeddings of the components at a separation pair.
Each component can be flipped and the components can arbitrarily be
permuted. This corresponds to a flip of the pearls and their
permutation in a map.

\begin{lemma}
There is a linear-time reduction from the problem of deciding
whether a given 1-planar graph is planar-maximal and plane-maximal,
respectively, to the special case where the graph is 3-connected.
\end{lemma}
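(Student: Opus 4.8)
The plan is to reduce the general problem to the 3-connected case by decomposing the graph along its separation pairs and showing that planar-maximality and plane-maximality can each be decided locally, component by component. First I would compute the SPQR-decomposition of $G$ in linear time \cite{dt-olpt-96}, which exposes all separation pairs and organizes the 3-connected components (the R-nodes), the polygons (S-nodes), and the bonds (P-nodes) in a tree. The key structural fact, already noted by Chen et al. \cite{cgp-rh4mg-06} and recalled in the excerpt, is that at a separation pair $\{u,v\}$ the graph $G$ is 1-planar if and only if each component $G_i$ of $G - \{u,v\}$, augmented by the edge $(u,v)$, is 1-planar. The idea is to treat each such $G_i + (u,v)$ as an independent instance and to argue that a new edge can be added to $G$ (planar or 1-planar, respectively) exactly when it can be added inside one of these pieces.

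The main step is to verify that maximality is preserved under the decomposition, i.e.\ that $G$ is planar-maximal (plane-maximal) 1-planar if and only if every 3-connected component, with the separating virtual edges inserted, is planar-maximal (plane-maximal) 1-planar. One direction is routine: if some component admits an extra edge, that edge lives inside a single face of the whole embedding and can be added to $G$, so non-maximality of a component yields non-maximality of $G$. For the converse I would use the pearl picture of Fig.~\ref{fig:pearls}: the embedding of $G$ around $\{u,v\}$ is an arrangement of the components as pearls, each with two contact points, and any candidate edge for $G$ either joins two vertices inside one pearl or joins vertices in different pearls. Edges joining vertices of distinct components (other than through $u$ or $v$) would have to cross the separating structure, and one shows these cannot be added without violating 1-planarity because the contact points already saturate the available crossing budget. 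Hence every addable edge is internal to some component, reducing maximality of $G$ to maximality of the components.

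The hard part will be handling the boundary behaviour at the separation pair, where the open/closed dichotomy of the preceding definition becomes essential. When a component $G_i$ is \emph{open} (a B-configuration or planar with 3-point contacts), the virtual edge $(u,v)$ can be drawn planar and the vertex freed into the outer face may admit an edge to the complementary part of $G$; this interaction across the separation pair is precisely what makes the reduction subtle, since maximality of $G_i + (u,v)$ in isolation is not the same as maximality after the rest of $G$ is attached along $(u,v)$. I would resolve this by recording, for each component and each choice of which side is opened, a constant amount of interface information (which of $u,v$ is a 3- or 4-point, and whether an open vertex exists), so that the status of cross-component edges is determined by comparing these interface labels of adjacent pearls rather than by re-examining the full graph.

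Assembling these pieces, the reduction runs in linear time overall: the SPQR-tree is built in $O(n)$, each virtual edge is inserted in constant time, the interface labels are computed by a single pass over the tree, and the decision for $G$ is a conjunction over the 3-connected components of their respective (planar-maximal or plane-maximal) status combined with the constant-size compatibility checks at each P- and S-node. Since the total size of all 3-connected components is linear in $n$, this yields the claimed linear-time reduction to the 3-connected case.
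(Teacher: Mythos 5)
Your overall strategy---SPQR decomposition, componentwise check of the pieces $G_i + (u,v)$, and interface information at the separation pairs---is the same as the paper's. But there is a genuine gap at exactly the point you yourself flag as ``the hard part,'' and two of your intermediate claims contradict each other. In your second paragraph you assert that edges joining vertices of distinct components ``cannot be added without violating 1-planarity,'' so that maximality of $G$ reduces to the plain conjunction ``every component is planar-maximal (plane-maximal).'' In your third paragraph you correctly observe the opposite: an open vertex of one component may admit a planar edge to the rest of $G$. The second observation is the truth, and it is precisely why the reduction is \emph{not} the conjunction you announce as the main step; the componentwise condition must be supplemented by a condition on how many components are open.

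What is missing is that supplementary condition, and the fix you sketch---comparing interface labels of \emph{adjacent} pearls---would compute the wrong answer, because the components at a separation pair can be flipped and arbitrarily permuted; which pearls are adjacent is not an invariant of the graph but a choice of embedding. The correct conditions are counting conditions over \emph{all} components at the separation pair, and they differ between the two problems because of the quantifier over embeddings. For planar-maximality (every embedding must be maximal): each $G_i+(u,v)$ must be planar-maximal and \emph{at most one} component may be open, since two open components could be placed next to each other in some embedding, putting two open vertices in a common face and creating an addable planar edge. For plane-maximality (some embedding must be maximal): each $G_i+(u,v)$ must be plane-maximal and the number of \emph{two-sided open} components must not exceed the number of \emph{closed} ones, since exactly then the pearls can be interleaved so that no face contains two open vertices. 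Without these explicit conditions your reduction is unspecified, and with the ``adjacent pearls'' version it is incorrect; once they are in place, the linear-time traversal of the SPQR-tree goes through as you describe.
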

\begin{proof}
Clearly, a graph $G$ is (plane or planar-maximal) 1-planar if and
only if at every separation pair $\{u,v\}$ the components $G_i +e$
are (plane or planar-maximal) 1-planar, respectively, where $G -
\{u,v\}$ is decomposed into components  $G_1, \ldots, G_k$ for some
$k>1$ and the edge $e = (u, v)$ is planar. Each $G_i +e$ is a
subgraph of $G$ and thus remains 1-planar.

 Suppose that each $G_i + e$  is planar-maximal 1-planar.
 Then $G$ is planar-maximal if and only if at most one $G_i +
 \{u,v\}$ is open.
Similarly, if each $G_i + e$  is plane-maximal 1-planar, then $G$ is
plane-maximal if and only if the number of two-sided open components
does not exceed  the number of components with a closed embedding.
Then the components can be arranged such that two open vertices do
not appear in the same face.

 These properties are checked in linear time along the SPQR
decomposition tree, in which the input graph is recursively
decomposed at its separation pairs and at an edge if the component
is 3-connected, see \cite{dt-olpt-96}. \qed
\end{proof}

The parallel results for IC-planar graphs are not yet clear. If
algorithm  $\mathcal{A}$ finds a 1-planar embedding which is not IC,
then there may be another IC-planar embedding.

For outer 1-planar graphs, Auer et al. \cite{abbghnr-o1p-15} showed
that the recognition of  maximal-planar, maximal and outer 1-planar
graphs, respectively, can be
 solved in linear time. They use the decomposition of a graph into
its 2-connected components and retrieve planar-maximality and
maximality  directly from the structure of the SPQR-tree
\cite{dt-olpt-96}. For optimal outer 1-planarity one can either
check that the given graph is outer 1-planar and has $2.5n-4$ edges
or that the SPQR-tree is composed of kites. Similarly, properties
like plane-maximal, crossing-augmented and fully triangulated can
directly be recognized at the SPQR-tree.

\begin{theorem}
For a graph $G$, the following problems can be solved in linear
time.
\begin{enumerate}
\item  Is $G$ outer 1-planar  \cite{abbghnr-o1p-15, heklss-ltao1p-15}?
\item   Is  $G$ crossing-augmented outer 1-planar?
\item   Is  $G$ weakly triangulated outer 1-planar?
\item  Is $G$ plane-maximal ?
\item  Is $G$ planar-maximal  \cite{abbghnr-o1p-15}?
\item  Is $G$ maximal outer 1-planar \cite{abbghnr-o1p-15}?
\item  Is $G$ optimal outer 1-planar?
\end{enumerate}
\end{theorem}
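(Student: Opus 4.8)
The plan is to prove all seven items by exploiting the SPQR-tree decomposition of a 2-connected graph, building on the linear-time recognition of outer 1-planarity (item 1) already established in the literature. The key structural fact I would rely on is that an outer 1-planar graph admits an embedding with all vertices on the outer face, so its SPQR-tree has a very restricted shape: the 3-connected (R-node) components correspond to a bounded catalogue of small building blocks — essentially planar triangles and kites ($K_4$ drawn with one crossing) — while the S-nodes (series) and P-nodes (parallel) glue these along separation pairs. Since outer 1-planarity is decided in linear time and yields such a tree, the first step is to compute the SPQR-tree and annotate each node with the local embedding information (which edges cross, which vertices lie on the outer boundary).

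The second step is to read off each maximality or augmentation property as a \emph{local} condition at the nodes of the tree, using the definitions from Section~\ref{main} together with the open/closed and B/W-configuration dichotomy established in the preceding lemma. Concretely: a graph is \emph{crossing-augmented} (item 2) iff every R-node realizing a crossing carries the four kite edges $\{a,b\},\{b,c\},\{c,d\},\{d,a\}$, which is checked directly at each such node. \emph{Weak triangulation} (item 3) iff every bounded face of the separated embedding is a triangle or a kite-face, again a per-node check. \emph{Plane-maximal} (item 4) reduces, by the lemma, to the 3-connected case together with the counting condition that the number of two-sided open components at each separation pair does not exceed the number of closed ones; \emph{planar-maximal} (item 5) to the condition that at most one component at each separation pair is open; \emph{maximal} (item 6) to the absence of any addable edge, which for outer 1-planar graphs is forced by the completeness of each building block plus the open/closed bookkeeping at P-nodes. \emph{Optimality} (item 7) is handled by the edge-count criterion: verify outer 1-planarity and that $m = 2.5n-4$, or equivalently that every R-node of the SPQR-tree is a kite, as noted for the optimal case.

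The third step is to argue that each of these per-node tests, together with the global bookkeeping across P-nodes (counting open versus closed components, detecting at most one open component, summing the edge count), is computable in time linear in the size of the tree, hence in $O(n)$ overall. The SPQR-tree itself has size $O(n)$ and is built in linear time; each node stores only a bounded amount of embedding data because outer 1-planar R-nodes come from a finite list; and the aggregation of open/closed counts is a single bottom-up pass. Thus every property is a linear-time post-processing of the same tree.

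The main obstacle I anticipate is item~4 (plane-maximal) and the interaction of the permutation-and-flip freedom of components at a P-node with the maximality conditions: because components around a separation pair can be permuted and individually flipped (the ``chain of pearls'' picture of Fig.~\ref{fig:pearls}), plane-maximality is not a purely local property but depends on whether the open vertices can be \emph{separated into distinct faces}. The delicate part is to prove that the counting criterion from the lemma — that two-sided open components must not outnumber closed ones — is both necessary and sufficient, and that when it holds one can always realize an arrangement in which no two open vertices share a face; this requires a careful matching argument pairing each two-sided open component with a closed one so that the open vertices are enclosed, and verifying that this arrangement is compatible with the outer-face constraint. Once this combinatorial lemma is in place, the remaining items follow by the more routine per-node checks described above.
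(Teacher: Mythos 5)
Your proposal is correct and follows essentially the same route as the paper: the paper also relies on the linear-time outer 1-planarity recognition of Auer et al.\ and Hong et al.\ for items 1, 5 and 6, reads the remaining properties (crossing-augmented, triangulated, plane-maximal) directly off the SPQR-tree via local per-node checks combined with the open/closed component bookkeeping of the preceding lemma, and decides optimality by the $2.5n-4$ edge count or the all-kites condition on the SPQR-tree. Your elaboration of the plane-maximal case (item 4) via the counting/matching argument at separation pairs is exactly the content the paper delegates to its lemma, so no genuinely different machinery is introduced.
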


Finally, we can improve  upon a result  of Eades et al.
\cite{ehklss-tm1pg-13b} on the recognition of 1-planar rotation
systems. The algorithm of Eades et al. considers walks  around a
face and finds a simple cycle if the face is planar and traverses
the crossing edges twice in opposite directions  before the walk
revisits an edge in the same direction if there is a kite. Simply
speaking, it uses the crossing edges as a bridge.

\begin{corollary}
There is a linear time algorithm to test whether a rotation system
is 1-planar if the underlying embedding is  crossing-augmented.
\end{corollary}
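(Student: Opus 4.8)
The plan is to show that the face-tracing algorithm of Eades et al.\ \cite{ehklss-tm1pg-13b}, whose correctness was argued only for maximal 1-planar graphs, in fact uses nothing beyond the crossing-augmented property, so that the corollary follows by relaxing the hypothesis in their analysis. Recall that, given a rotation system $\mathcal{R}(G)$, the algorithm walks around faces by passing at each vertex from an incoming dart to the next dart in the cyclic order; a planar face is recovered as a simple cycle, and a crossing is handled by using its two crossing edges as a \emph{bridge}, so that the walk traverses each of them twice in opposite directions before it revisits any edge in the same direction. First I would restate this walk precisely and identify the consistency condition that certifies 1-planarity: the traced faces must assemble into the face set of the planarization, where each detected crossing becomes a degree-four dummy vertex, and the counts of vertices, edges, and faces of this planarization must satisfy Euler's relation $f = m - n + c + 2$ on each connected component, with $c$ the number of crossings.

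The key point is that the bridging step is well defined precisely when every crossing pair $\{a,c\},\{b,d\}$ is enclosed by the four edges $\{a,b\},\{b,c\},\{c,d\},\{d,a\}$, that is, when the embedding is crossing-augmented in the sense of our first definition. Under this hypothesis each crossing appears locally as a kite, and the kite is detectable from $\mathcal{R}(G)$ alone: the four quadrilateral edges fix the cyclic positions of the two diagonals at $a,b,c,d$ and thereby determine a unique rotation at the crossing point. Eades et al.\ invoke maximality only to guarantee that every crossing is a kite; since crossing-augmentation is exactly this guarantee and is implied by maximality (as observed earlier), their walk and its correctness proof carry over verbatim with ``maximal'' replaced by ``crossing-augmented''.

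For correctness I would argue both directions against the walk. If $\mathcal{R}(G)$ admits a crossing-augmented 1-planar embedding, then every crossing is a kite, the bridging walk returns exactly the faces of the planarization, and the Euler relation holds. Conversely, if the walk closes up into faces meeting this relation, with each diagonal used as a bridge and every other face a simple cycle, then I would reassemble the planarization from the traced faces and invoke the classical fact that a rotation system realizing a planar map is planar; contracting the bridges back to crossing points then yields a crossing-augmented 1-planar embedding respecting $\mathcal{R}(G)$. Since a 1-planar graph has $m \le 4n-8$ edges and each dart and each bridge is processed a constant number of times, the whole procedure runs in $O(n)$ time.

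The step I expect to be the main obstacle is the ``only if'' direction, namely verifying that crossing-augmentation alone, without the additional edges that maximality supplies, removes all ambiguity when the walk meets a quadrilateral $a,b,c,d$: one must rule out that such a four-cycle is read both as a genuine planar face and as a kite with crossing diagonals. Establishing that the presence of the diagonals $\{a,c\},\{b,d\}$ in $G$ together with their forced cyclic positions at the four vertices pins down a unique local interpretation is where the care lies; once this local determinacy is in hand, global consistency follows from the Euler count exactly as in \cite{ehklss-tm1pg-13b}.
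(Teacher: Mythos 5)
Your proposal is correct and follows essentially the same route as the paper: the paper's justification is precisely that the face-tracing algorithm of Eades et al.\ uses the crossing edges only as a ``bridge'' around each kite, so its analysis for maximal 1-planar graphs carries over verbatim once crossing-augmentation guarantees that every crossing is enclosed in a kite. Your additional care about the quadrilateral ambiguity and the Euler-count verification fleshes out details the paper leaves implicit, but the key idea --- maximality was invoked only to ensure kites, and crossing-augmentation is exactly that guarantee --- is the same.
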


\section{Conclusion and Open Problems}
\label{conclusion}

We showed that 1-planarity  can be tested in polynomial time if the
graphs are crossing-augmented, planar-maximal, maximal and optimal,
respectively. \\
(i) Do similar results  also hold for IC-planarity?

There are many other classes of  beyond  planar graphs, such as
fan-planar \cite{bcghk-rfpg-14, bddmpst-fan-15}, bar 1-visibility
\cite{DEGLST-bkvg-07} and bar (1,j)-visibility graphs
\cite{bhkn-bvg-15}, right angle crossing graphs (RAC)
\cite{del-dgrac-11}, quasi-planar graphs \cite{aapps-qpg-97}, and
rectangle visibility graphs \cite{hsv-rstg-99}. \\
(ii) It is unknown
whether planar-maximality, maximality and optimality in these
classes can be recognized in polynomial time. For outer-fan planar
graphs, maximality can be tested in linear time
\cite{bcghk-rfpg-14}.

In general, 1-planar embeddings are not unique. However, it seems
that such embeddings are weakly equivalent if the graphs are
(planar-) maximal or optimal.   Two embeddings $\mathcal{E}_1(G)$
and $\mathcal{E}_2(G)$ of a graph $G$ are weakly equivalent if there
is a graph automorphism $\sigma : G \rightarrow G$ such that
$\mathcal{E}_1(G)$ is (topologically) equivalent   to
$\mathcal{E}_2(\sigma(G))$. Schumacher \cite{s-s1pg-86} and Suzuki
\cite{s-rm1pg-10} proved that optimal 1-planar graphs are weakly
equivalent. \\
(iii) Do maximal (planar-maximal) 1-planar and IC-planar graphs,
respectively,  have a unique embedding up to weak isomorphism?

Chen et al. \cite{cgp-mg-02} address a generalization of maps and
allow  a region $u$ to include another region. They conjecture that
the recognition problem for this generalization remains polynomially
solvable, which clearly holds true, since the enclosing region is an
articulation vertex of the map graph. Another generalization of Chen
et al. is unclear. The relation between two regions shall be left
unspecified. The regions may touch or not. If regions may touch, but
the respective vertices in the map graph are not connected by an
edge, then the resulting map graphs are subgraphs of 3-connected
1-planar graphs. For such graphs, the conjecture of Chen et al.
holds true, since the recognition problem for 1-planar graphs is
\NP-complete  \cite{abgr-1prs-15}. Note that there are 1-planar
graphs, such as the sparse maximal 1-planar graphs of Brandenburg et
al. \cite{begghr-odm1p-13}, which are not a subgraph of a
3-connected 1-planar graph.

\section{Acknowledgements}
  I would like to thank Christian Bachmaier for many inspiring discussions
  and his support, and Giuseppe Liotta for the hint on map graphs.

\bibliographystyle{splncs03}
\bibliography{brandybibV2}

\end{document}